\documentclass[conference]{IEEEtran}
\IEEEoverridecommandlockouts
\usepackage{cite}
\usepackage{amsmath,amssymb,amsfonts}
\usepackage{algorithmic}
\usepackage{amsthm}
\usepackage{graphicx}
\usepackage{textcomp}
\usepackage{setspace}
\usepackage{epsfig}
\usepackage{color}
\usepackage{lettrine}
\usepackage{mathtools}
\usepackage{float}
\usepackage{array}
\usepackage{multicol}
\usepackage[space]{grffile}
\usepackage{url}
\usepackage{mathtools}

\newcommand{\jj}{$\mathcal{FJ}$}
\newcommand{\rr}{$\mathcal{R}$}
\newcommand{\src}{$\mathcal{S}_1$}
\newcommand{\des}{$\mathcal{S}_2$}
\newcommand{\E}{\ensuremath{\mathbb E}}

\def \treq {\stackrel{\tiny \Delta}{=}}

\newtheorem{remark}{Remark}
\newtheorem{proposition}{Proposition}

\def\BibTeX{{\rm B\kern-.05em{\sc i\kern-.025em b}\kern-.08em
    T\kern-.1667em\lower.7ex\hbox{E}\kern-.125emX}}
\begin{document}

\title{Secure two-way communication via a wireless powered untrusted relay and friendly jammer}

\author{
	
	\IEEEauthorblockN{
		Milad Tatar Mamaghani\IEEEauthorrefmark{1}, Abbas Mohammadi\IEEEauthorrefmark{1}, Phee Lep Yeoh\IEEEauthorrefmark{2}, and Ali Kuhestani\IEEEauthorrefmark{1}}
	
	\IEEEauthorblockA{
		\IEEEauthorrefmark{1}Electrical Engineering Department, Amirkabir University of Technology (Tehran Polytechnic), Tehran, Iran}
	
	\IEEEauthorblockA{
		\IEEEauthorrefmark{2}School of Electrical and Information Engineering, The University of Sydney, NSW, Australia}

}

\maketitle

\begin{abstract}
In this paper, we propose a self-dependent two-way secure communication where two sources exchange confidential messages via a wireless powered untrusted amplify-and-forward (AF) relay and friendly jammer (FJ). By adopting the time switching (TS) architecture at the relay, the data transmission is accomplished in three phases: Phase I) Energy harvesting by the untrusted relay and the FJ through non-information transmissions from the sources, Phase II) Information transmission by the sources and jamming transmissions from the FJ to reduce information leakage to the untrusted relay; and Phase III) Forwarding the scaled version of the received signal from the untrusted relay to the sources. For the proposed system, we derive a new closed-form lower bound expression for the ergodic secrecy sum rate (ESSR). Numerical examples are provided to demonstrate the impacts of different system parameters such as energy harvesting time, transmit signal-to-noise ratio (SNR) and the relay/FJ location on the secrecy performance. The numerical results illustrate that the proposed network with friendly jamming (WFJ) outperforms traditional one-way communication and the two-way without friendly jamming (WoFJ) policy.
\end{abstract}

\begin{IEEEkeywords}
Physical layer security, Untrusted relay, Energy harvesting, Two-way communication
\end{IEEEkeywords}

\section{Introduction}
Wireless physical-layer security (PLS) is a prominent paradigm for improving the information transmission security of future generation communications networks {\cite {review0}}, {\cite {review1}}. Due to the broadcast nature of wireless transmissions, PLS strategies are designed to exploit the fading channel dynamics of the legitimate users and potential eavesdroppers to support secure transmission. A key area of interest is the untrusted relaying scenario where the source-to-destination transmission is assisted by a relay which may also be a potential eavesdropper {\cite {review2}}. This scenario occurs in large-scale wireless systems such as heterogeneous networks and device-to-device (D2D) communication networks where confidential messages are often retransmitted by multiple intermediate nodes.

Secure transmission utilizing an untrusted relay was first studied in {\cite {Oohama}}, where an achievable secrecy rate was derived. In
{\cite {he1}}, it was found that introducing a friendly jammer (FJ) could result in a positive secrecy rate for a one-way untrusted relay
link with no direct source-destination transmission. Indeed, many recent papers on untrusted relay communications have
focused on the one-way relaying scenario {\cite {Kuhestani2}}, {\cite {cioffi}}. Recently, several works have considered the more interesting scenario of two-way untrusted relaying {\cite {Kuh3}}, {\cite {Green}} where physical-layer network coding can provide security enhancement since the relay receives a superimposed signal from the two sources instead of each individual signal \cite{Kushi}.

Due to the energy constraints of wireless networks, recent research have considered the use of wireless energy harvesting in PLS scenarios with relays and FJs {\cite {salman}}--{\cite {Zhao}}. To be specific, the authors in \cite{salman} employed a FJ with wireless energy harvesting to provide secure communication between a source and destination in the presence of an external eavesdropper. In \cite{Kalamkar}, the authors studied secure one-way communication via a wireless energy harvesting untrusted relay and provided a lower bound expression for the ergodic secrecy rate. In \cite{Zhao}, the secrecy rate of the wireless-powered relay networks was maximized by jointly designing power splitting and relay beamforming techniques.

In this paper, inspired by the work in {\cite {Kalamkar}}, {\cite {Zhao}}, we propose and analyze a new PLS scenario of wireless energy harvesting in two-way untrusted amplify-and-forward relaying in the presence of a FJ. Both the relay and FJ are powered by radio frequency (RF) signals from the two source nodes under a time switching (TS) policy. For the proposed system, we derive a new closed-form lower bound for the ergodic secrecy sum rate (ESSR). Numerical results show that the proposed two-way secure communication with friendly jamming (WFJ) provides significant security advantages compared to the one-way communication scheme proposed in \cite{Kalamkar} and two-way relaying without friendly jamming (WoFJ). We further discuss important design insights into the security impact of key system parameters including energy harvesting time ratio, transmit signal-to-noise ratio (SNR) and relay/FJ location.

\section{System Model and Signal Representation}

\begin{figure}[t]
	\centering
	\includegraphics[width= \columnwidth]{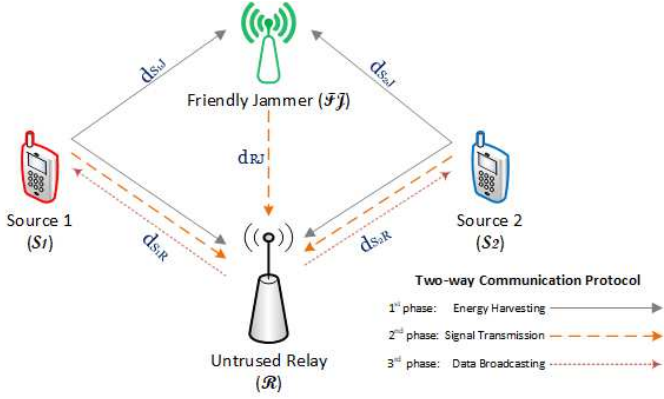}
	\caption{System model of two-way secure wireless powered network via an untrusted relay and a friendly jammer}
	\label{fig}
\end{figure}
We consider the PLS of a two-way communication scenario illustrated in Fig. \ref{fig} where two source nodes, ($\mathcal{S}_1$) and ($\mathcal{S}_2$) communicate with each other via an untrusted amplify-and-forward (AF) relay ($\mathcal{R}$) which acts as both a requisite helper as well as a potential passive eavesdropper. Two secure transmission protocols are taken into account: i) WFJ in which one friendly jammer ($\mathcal{FJ}$) is employed to enhance the security of the network by degrading the relay channel through sending its jamming signal, and ii) WoFJ. Note that both $\mathcal{R}$ and $\mathcal{FJ}$ are assumed to be energy-starved nodes, equipped with rechargeable batteries so that they can be wirelessly powered by the sources. It is assumed that most of the nodes' energy are consumed for data transmission, and energy consumption for signal processing is ignored for simplicity \cite{Nasir}. In the WFJ scenario, the data exchange between two sources is implemented in three phases. In the first phase, shown with solid lines,~\src~and~\des~transmit non-information signals to~\jj~and~\rr, to charge them. It is worth mentioning that the helper nodes, $\mathcal{R}$ and $\mathcal{FJ}$, contribute to information transmission until the saved energy of the received signals in the first phase of the communication is more than the specified minimum energy harvesting (EH) threshold. During the second phase, the source nodes send their information signals to the relay. Simultaneously,~\jj~deteriorates the relay channel capacity by transmitting its jamming signal powered by the sources in the first stage, as demonstrated with dashed lines. Finally, in the third phase, the relay broadcasts the scaled version of the received signal to the sources and then they each extract their corresponding information signals after self-interference cancellation. Despite the fact that~\jj~is able to harvest the energy of broadcast signal from~\rr~during the third phase, we do not consider this signal due to its high attenuation experienced through two paths, i.e. from $\mathcal{S}_{1, 2}$ to~\rr, and then, from~\rr~to~\jj. Therefore, its power is negligible to be harvested. 

The WoFJ scenario follows the same three phases as the WFJ scheme assuming the absence of the FJ.

\subsection{Channel Model}

In the proposed scenario, we assume that all the nodes are equipped with a single antenna and operate in half-duplex mode. The direct link between~\src~and~\des~is assumed to be in absence, which is common in the scenarios where two sources are located far away from each other or within heavily shadowed areas  which makes using the relay mandatory {\cite {Kalamkar}}, {\cite {Nasir}}. The channels are assumed to be reciprocal and follow a quasi-static block-fading Rayleigh model \cite{Nasir}. Furthermore, a key assumption is that the sources have perfect knowledge of the jamming signal transmitted by~\jj~as well as the channel state information (CSI) of \src\textendash\rr,~\des\textendash\rr, and~\jj\textendash\rr~channels {\cite {Kuh3}}. Let us denote $h_{ij}$ as the channel coefficient between nodes $i$ and $j$, with channel reciprocity where  $h_{ij}=h_{ji}$. The channel power gain $|h_{ij}|^2$ follows an exponential distribution with mean $\mu_{ij}$ as
\begin{equation}\label{pdf}
f_{|h_{ij}|^2}(x)=\frac{1}{\mu_{ij}}\exp(-\frac{x}{\mu_{ij}}), \quad x>0
\end{equation}
where $f_{|h_{ij}|^2}(x)$ is the probability density function (pdf) of r.v. $|h_{ij}|^2$.

\subsection{Time Switching Relaying Protocol}
\begin{figure}[t]
	\centering  
	\includegraphics[width= \columnwidth]{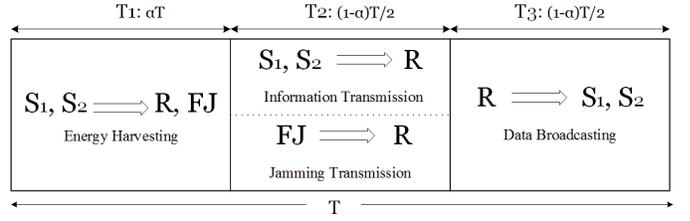}
	\caption{ Time switching relaying protocol for two-way communication via a wireless powered untrusted relay and a friendly jammer}
	\label{TS}
\end{figure}
Fig. \ref{TS} describes the proposed wireless energy harvesting two-way relaying transmission protocol. Using the TS policy, the relay time-switches between information processing and energy harvesting, and completes a round of data exchange in three phases or a period of $T$. To be specific, in the first phase with duration of $T_1=\alpha T$ ($ 0<\alpha<1$), both~\rr~and~\jj~harvest the energy of the RF signals transmitted by~\src~and~\des. In the second time slot which lasts $T_2=(1-\alpha)\frac{T}{2}$,~\src~and~\des~send their information signals to~\rr, while~\jj~transmits its jamming signal, powered by received RF signals during the first phase of communication. Finally, in the third time slot,~\rr~broadcasts the scaled version of the received signal.

\subsection{Energy Harvesting at the Relay and Friendly Jammer}

In the first phase, two source nodes send non-information signals, powering both~\rr~and~\jj~to obtain the required power for activity. The received power at~\rr~and \jj~are respectively, given by \eqref{pr} and \eqref{pj}
\begin{equation}\label{pr}
P_{R}=P_{S_{1}}  |h_{S_{1}R}|^2+P_{S_{2}}  |h_{S_{2}R}|^2,
\end{equation}
\begin{equation}\label{pj}
P_{J}=P_{S_{1}}  |h_{S_{1}J}|^2+P_{S_{2}}  |h_{S_{2}J}|^2.
\end{equation}
Note that $P_{R}$ and $P_{J}$ should be more than the minimum threshold power ($\Theta_{R}$) to activate the harvesting circuitry. In the TS protocol, the harvested energy $E_{HR}$ and $E_{HJ}$ during $\alpha T$ duration at~\rr~and~\jj~are respectively, given by
\begin{equation}\label{ehr}
E_{HR}=\eta_{R} \alpha T(P_{S_{1}}  |h_{S_{1}R}|^2+P_{S_{2}}  |h_{S_{2}R}|^2),
\end{equation}
and
\begin{equation}\label{ehj}
E_{HJ}=\eta_{J}\alpha T(P_{S_{1}}  |h_{S_{1}J}|^2+P_{S_{2}}  |h_{S_{2}J}|^2),
\end{equation}
where $\eta_R$ and $\eta_J$ with $0<\eta_R, \eta_J<1$ represent the energy conversion efficiency factors of~\rr~and~\jj, respectively. The relay uses the harvested energy in \eqref{ehr} to retransmit the source signals in the third phase with power $P_{TR}$ which can be written as
\begin{equation}\label{ptr}
P_{TR}=\frac{E_{HR}}{(1-\alpha)\frac{T}{2}}=\frac{2\eta_{R} \alpha (P_{S_{1}}  |h_{S_{1}R}|^2+P_{S_{2}}  |h_{S_{2}R}|^2)}{1-\alpha}.
\end{equation}
\jj~uses the harvested energy in \eqref{ehj} to transmit its jamming signal with the power of $P_{TJ}$ which can be expressed as
\begin{equation}\label{ptj}
P_{TJ}=\frac{E_{HJ}}{(1-\alpha)\frac{T}{2}}=\frac{2\eta_{J}\alpha (P_{S_{1}}  |h_{S_{1}J}|^2+P_{S_{2}}  |h_{S_{2}J}|^2)}{1-\alpha}.
\end{equation}

\subsection{Information Processing at the Relay}	

The received signal at~\rr, in the second phase, can be written as
\begin{eqnarray}\label{yr}
y_{R}{\hspace {-3mm}}&=&{\hspace {-3mm}}\sqrt{P_{S_{1}}}x_{S_{1}}h_{S_{1}R}+\sqrt{P_{S_{2}}}x_{S_{2}}h_{S_{2}R} \nonumber \\
{\hspace {-3mm}}&+&{\hspace {-3mm}}\sqrt{P_{TJ}}x_{J}h_{JR}+n_{R},
\end{eqnarray}
where $x_{S_{1}}$, $x_{S_{2}}$ and $x_{J}$ are the~\src,~\des~and~\jj~signals with the powers of $P_{S_{1}}$, $P_{S_{2}}$ and $P_{TJ}$, respectively. Moreover, $n_{R}$ is considered as the additive white Gaussian noise (AWGN) at the relay and for simplicity the processing noise at the relay is ignored {\cite {Kalamkar}}. Note that ~\jj~sends its jamming signal $x_{J}$ with total power harvested in the first phase, which is higher that the minimum threshold power for circuitry activation. Based on the received signal $y_{R}$ in \eqref{yr}, the signal-to-interference-plus-noise-ratio at $\mathcal{R}$ can be expressed as

\begin{eqnarray} \label{gammaR}
\gamma_{R}{\hspace {-3mm}}&=&{\hspace {-3mm}}\frac{
	P_{S_{1}}  |h_{S_{1}R}|^2+P_{S_{2}}  |h_{S_{2}R}|^2}{P_{TJ}  |h_{JR}|^2+N_{0}} \nonumber \\
{\hspace {-3mm}}&=&{\hspace {-3mm}}\mbox{}\frac{P_{S_{1}}  |h_{S_{1}R}|^2+P_{S_{2}}  |h_{S_{2}R}|^2}{\frac{2\eta_{J}\alpha}{1-\alpha} (P_{S_{1}}  |h_{S_{1}J}|^2+P_{S_{2}}  |h_{S_{2}J}|^2) |h_{JR}|^2+N_{0}},
\end{eqnarray}
where $N_{0}$ is the noise power at $\mathcal{R}$ and we assume that the relay performs multiuser decoding to estimate the signals from~\src~and~\des. Finally, $\mathcal{R}$ broadcasts
\begin{eqnarray}\label{output_relay}
x_{R}=\zeta y_{R},
\end{eqnarray}
where the scaling factor of $\mathcal{R}$ is
\begin{equation}\label{zeta}
\zeta=\sqrt{\frac{P_{TR}}{P_{S_{1}}  |h_{S_{1}R}|^2+P_{S_{2}}  |h_{S_{2}R}|^2+P_{TJ}|h_{JR}|^2+N_{0}}}.
\end{equation}

\subsection{Information Processing at the Sources}	

Next, we detail the received signal at~\des, from which similar expressions can be derived for the received signal at~\src. By using \eqref{yr} and \eqref{output_relay}, the received signal at~\des~can be expressed as
\begin{eqnarray}\label{ys2prim}
y_{S_{2}}'{\hspace {-3mm}}&=&{\hspace {-3mm}}h_{RS_{2}}x_{R}+n_{S_{2}} \nonumber \\
{\hspace {-3mm}}&=&{\hspace {-3mm}}\sqrt{P_{S_{1}}} \zeta h_{S_{1}R}h_{RS_{2}}x_{S_{1}}+\sqrt{P_{S_{2}}} \zeta h_{S_{2}R}h_{RS_{2}}x_{S_{2}} \nonumber\\
{\hspace {-3mm}}&+&{\hspace {-3mm}}\sqrt{P_{TJ}}\zeta h_{JR}h_{RS_{2}}x_{J}+\zeta h_{RS_{2}}n_{R}+n_{S_{2}},
\end{eqnarray}
where $n_{S_{2}}$ is the AWGN at~\des~with the power $N_{0}$. Since we assume that~\src~and~\des~know the jamming signal ($x_{J}$) transmitted by~\jj~and the CSI of~\src$\leftrightarrow$\rr,~\des$\leftrightarrow$\rr~and~\rr$\leftrightarrow$\jj, this means that~\src~and~\des~are able to cancel the self-interference and the jamming signals in \eqref{ys2prim}. Accordingly, the received signal at~\des~can be simplified as
\begin{equation}\label{ys2}
y_{S_{2}}=\sqrt{P_{S_{1}}} \zeta h_{S_{1}R}h_{RS_{2}}x_{S_{1}}+\zeta h_{RS_{2}}n_{R}+n_{S_{2}}.
\end{equation}
Substituting \eqref{zeta} into \eqref{ys2}, the received instantaneous end-to-end SNR at~\des~can be obtained as
\begin{equation} \label{gammad1hsnr}
\gamma_{S_{2}}=\frac{2\eta_{R}\alpha P_{S_{1}}  |h_{S_{1}R}|^2 |h_{RS_{2}}|^2}{{2\eta_{R}\alpha|h_{RS_{2}}|^2 N_{0}+
		\frac{N_{0}(P_{TJ}|h_{JR}|^2(1-\alpha))}{P_{S_{1}}|h_{S_{1}R}|^2+P_{S_{2}}|h_{S_{2}R}|^2}}
	{+N_{0}(1-\alpha)+\epsilon}}.
\end{equation}
where $\epsilon=\frac{N_{0}^2(1-\alpha)}{P_{S_{1}}|h_{S_{1}R}|^2+P_{S_{2}}|h_{S_{2}R}|^2}$. Following the same procedure of~\des, the received instantaneous end-to-end SNR at~\src~is given by
\begin{equation} \label{gammad2hsnr}
\gamma_{S_{1}}=\frac{2\eta_{R}\alpha P_{S_{2}}  |h_{S_{2}R}|^2 |h_{RS_{1}}|^2}
{{2\eta_{R}\alpha|h_{RS_{1}}|^2 N_{0}+
		\frac{N_{0}(P_{TJ}|h_{JR}|^2(1-\alpha))}{P_{S_{1}}|h_{S_{1}R}|^2+P_{S_{2}}|h_{S_{2}R}|^2}}
	{+N_{0}(1-\alpha)+\epsilon}}.
\end{equation}
To make the subsequent analysis tractable, we proceed to examine high SNR relaying regime where $\epsilon=0$ in \eqref{gammad1hsnr} and \eqref{gammad2hsnr}.

\section{Performance Analysis}
In this section, we first derive closed-form expressions for the power outage probability at the relay and the FJ. Then, analytical expressions and closed-form lower bound expressions are evaluated for both the cases of WFJ and WoFJ.

\subsection{Power Outage Probability at the Relay}
The received power at the relay must be greater than the minimum power threshold $\Theta_{R}$ to activate the energy harvesting circuitry. If the received power $P_{R}$, in \eqref{pr}, is less than the threshold power $\Theta_{R}$, the power outage probability occurs, which is defined as
\begin{equation}\label{pop}
P_{por}=\ensuremath{\mathbb P}\left( P_{R} < \Theta_{R} \right)
\end{equation}
The power outage probability at the relay is computed using the following proposition.

\begin{proposition}
 The power outage probability at the relay is calculated as
 \begin{equation} \label{por}
 P_{por}=\left \{\begin{array}{ll} 1-\frac{P_{S_{2}}\mu_{S_{2}R}}{P_{S_{2}}\mu_{S_{2}R}-P_{S_{1}}\mu_{S_{1}R}}\exp(-\frac{\Theta_{R}}{P_{S_{2}}\mu_{S_{2}R}})\\ +\frac{P_{S_{1}}\mu_{S_{1}R}}{P_{S_{2}}\mu_{S_{2}R}-P_{S_{1}}\mu_{S_{1}R}}	\exp(-\frac{\Theta_{R}}{P_{S_{1}}\mu_{S_{1}R}}),\space m_x \neq m_y \\     \Upsilon(2,m\Theta_{R}),\qquad \qquad \qquad \qquad m_x = m_y\treq{m} 	\end{array}
 \right.
 \end{equation}
 where $m_x{\hspace {-1mm}}={\hspace {-1mm}}\frac{1}{P_{S_{1}}\mu_{S_{1}R}}$, $m_y{\hspace {-1mm}}={\hspace {-1mm}}\frac{1}{P_{S_{2}}\mu_{S_{2}R}}$, and $\Upsilon(s,x){\hspace {-1mm}}={\hspace {-1mm}}\int_{0}^{x}t^{(s-1)}\emph{e}^{-t} dt$ is the lower incomplete Gamma function \cite{papoulis}.
\end{proposition}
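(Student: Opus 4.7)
The plan is to recognize $P_R$ in \eqref{pr} as the sum of two independent exponential random variables and compute its CDF at $\Theta_R$. Setting $U = P_{S_1}|h_{S_1R}|^2$ and $V = P_{S_2}|h_{S_2R}|^2$, the pdf in \eqref{pdf} implies that $U$ and $V$ are independent exponentials with rate parameters $m_x = 1/(P_{S_1}\mu_{S_1R})$ and $m_y = 1/(P_{S_2}\mu_{S_2R})$ respectively. Hence $P_{por} = \mathbb{P}(U + V < \Theta_R) = F_{U+V}(\Theta_R)$, and the whole proof reduces to writing down the CDF of a hypoexponential (or Erlang) random variable.

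I would compute $F_{U+V}$ by direct convolution,
\begin{equation*}
F_{U+V}(z) = \int_0^z m_x e^{-m_x u}\bigl(1 - e^{-m_y(z-u)}\bigr)\,du,
\end{equation*}
and split into two cases. In the generic case $m_x \neq m_y$, the integrand simplifies to a combination of $e^{-m_x u}$ and $e^{-(m_x - m_y)u - m_y z}$, and elementary integration yields the hypoexponential form
\begin{equation*}
F_{U+V}(z) = 1 - \frac{m_y}{m_y - m_x}e^{-m_x z} + \frac{m_x}{m_y - m_x}e^{-m_y z}.
\end{equation*}
Substituting back $m_x = 1/(P_{S_1}\mu_{S_1R})$ and $m_y = 1/(P_{S_2}\mu_{S_2R})$, and multiplying numerator and denominator of each coefficient by $P_{S_1}\mu_{S_1R}\,P_{S_2}\mu_{S_2R}$, reproduces the first branch of \eqref{por} verbatim. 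In the degenerate case $m_x = m_y = m$, the sum $U+V$ is Erlang$(2,m)$; a single integration by parts on $\Upsilon(2,mz) = \int_0^{mz} t e^{-t}\,dt$ shows $\Upsilon(2,mz) = 1-(1+mz)e^{-mz}$, which is precisely $F_{U+V}(z)$, giving the second branch.

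The step that requires care is not conceptual but algebraic: in the unequal-rate branch, one must keep the sign of $P_{S_2}\mu_{S_2R} - P_{S_1}\mu_{S_1R}$ consistent when clearing denominators, since the proposition writes it in a specific order and swapping sign changes the pairing of each coefficient with its exponential. A useful sanity check — verifying that the first branch collapses to the second as $m_x \to m_y$ via L'Hôpital's rule — can be used after the simplification to confirm that the two pieces of \eqref{por} are mutually consistent.
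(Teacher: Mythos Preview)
Your proposal is correct and follows essentially the same approach as the paper's Appendix~A: recognize $P_R$ as a sum of two independent exponentials, convolve, and split into the $m_x\neq m_y$ (hypoexponential) and $m_x=m_y$ (Erlang) cases. The only cosmetic difference is that the paper first convolves the two pdfs to obtain $f_S(s)$ and then integrates over $[0,\Theta_R]$, whereas you convolve $f_U$ against $F_V$ to get $F_{U+V}$ in one step; both routes yield the stated expression.
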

\begin{proof}
See Appendix A.
\end{proof}
\begin{remark}
The power outage probability expression at~\jj~($P_{poj}$) can be obtained similar to $P_{por}$, by replacing $\mu_{S_1R}=\mu_{S_1J}$ and $\mu_{S_2R}=\mu_{S_2J}$ in \eqref {por}.
\end{remark}

\subsection{Ergodic Secrecy Sum Rate}
The ergodic secrecy rate characterizes the rate below which the average secure communication transmission is not achievable {\cite {review0}}. For our proposed system, the instantaneous secrecy rate $R_{sec}$  can be expressed as \cite{Kuh3}
\begin{equation}\label{rsec1}
R_{sec}=\left[I_{S_{1}}+I_{S_{2}}-I_{R}\right]^+,
\end{equation}
where
\begin{equation}\label{rsec2}
I_{K}=\frac{(1-\alpha)}{2}\log_{2}(1+\gamma_{K}),
\end{equation}
and $K~\in~$\{\src, \des, \rr\}. By combining \eqref{rsec1} and \eqref{rsec2}, the ESSR can be rewritten as
\begin{equation}\label{issr}
R_{sec}=\left[\frac{(1-\alpha)}{2}\log_{2}\frac{(1+\gamma_{S_{1}})(1+\gamma_{S_{2}})}{(1+\gamma_{R})}\right]^+,
\end{equation}
where $[x]^+ = \max(x,0)$ and the pre-log factor  $\frac{1-\alpha}{2}$ is due to the effective time of information exchange between the two sources. Moreover, $\gamma_{R}$, $\gamma_{S_{1}}$ and $\gamma_{S_{2}}$ are given by \eqref{gammaR}, \eqref{gammad1hsnr} and \eqref{gammad2hsnr}, respectively.

Based on \eqref{issr} and the relay and FJ's power outage probability, the exact expression of the ESSR can be written as
\begin{equation} \label{esssr}
{\bar{R}_{sec}=(1-P_{poj})(1-P_{por})\E\left\lbrace {R_{sec}}\right\rbrace }.
\end{equation}
which can be further expressed as
\begin{eqnarray}\label{essr}
\bar{R}_{sec}{\hspace {-3mm}}&=&{\hspace {-3mm}}(1-P_{poj})(1-P_{por}) \nonumber \\ {\hspace {-3mm}}&\times&{\hspace {-3mm}}\int_{0}^{\infty}\int_{0}^{\infty}\int_{0}^{\infty}\int_{0}^{\infty}\int_{0}^{\infty} R(x,y,z,w,u) \nonumber \\ {\hspace {-3mm}}&\times&{\hspace {-3mm}}f_{X}(x)f_{Y}(y)f_{Z}(z)f_{U}(u)f_{W}(w)dxdydzdudw,
\end{eqnarray}
where
\begin{equation}
R(x,y,z,w,u)=\left[\frac{(1-\alpha)}{2}\log_{2}\frac{(1+\gamma_{S_{2}})(1+\gamma_{S_{1}})}{(1+\gamma_{R})}\right]^+,
\end{equation}
where we define $X{\hspace{-1mm}}={\hspace{-1mm}}|h_{S_{1}R}|^2$, $Y{\hspace{-1mm}}={\hspace{-1mm}}|h_{S_{2}R}|^2$, $Z{\hspace{-1mm}}={\hspace{-1mm}}|h_{S_{1}J}|^2$, $W{\hspace{-1mm}}={\hspace{-1mm}}|h_{S_{2}J}|^2$ and $U{\hspace{-1mm}}={\hspace{-1mm}}|h_{RJ}|^2$ in the r.v.s of $\gamma_{R}$, $\gamma_{S_{1}}$ and $\gamma_{S_{2}}$.

While the multiple integral expression in \eqref{essr} can be easily evaluated numerically, a closed-form expression is not straightforward to obtain. As such, we proceed by obtaining a closed-form lower bound expression for the ESSR.

\begin{proposition}
The lower bound expression $\bar{R}_{LB}$ for the ergodic secrecy sum rate $\bar{R}_{sec}$  given by \eqref{esssr} can be stated as	
\begin{equation}\label{R_LB}
\bar{R}_{LB}=(1-P_{poj})(1-P_{por}) \frac{1-\alpha}{2\ln(2)}\left[T_1+T_2-T_3\right]^+,
\end{equation}
where
\begin{eqnarray} \label{T_1}
T_{1}\approx\ln\hspace*{-1mm}\left(\hspace*{-1mm}1+\hspace*{-1mm}\frac{2\eta_{R}\alpha P_{S_{1}}\mu_{S_{2}R}\mu_{S_{1}R}}
{\splitdfrac{2\eta_{R}\alpha\mu_{S_{2}R}N_{0}+N_{0}(1-\alpha)+2N_{0}\eta_{J}}
	{\hspace*{-15mm}\times\alpha\mu_{RJ}\left(P_{S_{1}}\mu_{S_{1}J}+P_{S_{2}}\mu_{S_{2}J}\right)\E\left\lbrace H\right\rbrace}}
\right),
\end{eqnarray}
\begin{eqnarray}\label{T_2}
T_{2}\approx\ln\hspace*{-1mm}\left(\hspace*{-1mm}1+\hspace*{-1mm}\frac{2\eta_{R}\alpha P_{S_{2}}\mu_{S_{2}R}\mu_{S_{1}R}}
{\splitdfrac{2\eta_{R}\alpha\mu_{S_{1}R}N_{0}+N_{0}(1-\alpha)+2N_{0}\eta_{J}}
	{\hspace*{-15mm}\times\alpha\mu_{RJ}\left(P_{S_{1}}\mu_{S_{1}J}+P_{S_{2}}\mu_{S_{2}J}\right)\E\left\lbrace H\right\rbrace}}
\right),
\end{eqnarray}
\begin{eqnarray}\label{T_3}
T_{3}\approx\ln\left(1+\frac{P_{S_{1}}\mu_{S_{1}R}+P_{S_{2}}\mu_{S_{2}R}}{\frac{2\eta_{J}\alpha}{1-\alpha}\left(P_{S_{1}}\mu_{S_{1}J}+P_{S_{2}}\mu_{S_{2}J}\right)\mu_{RJ}+N_0} \right),
\end{eqnarray}
and
\begin{eqnarray}\label{Eh}
\E\left(H\right)=\left \{ \begin{array}{ll}
{ \frac{\ln  \left( {\frac {\mu _{R}}{\mu _{S}}} \right)}{{\mu _{R}-\mu _{S}}}}
,&  \quad\mu_{S} \neq \mu_{R}  \\
\frac{1}{\mu}. & \quad \mu_{S} = \mu_{R} \treq \mu
\end{array} 	
\right.
\end{eqnarray}
\end{proposition}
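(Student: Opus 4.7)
The plan is to reduce the five-fold integral in \eqref{essr} to three scalar log-terms by combining Jensen-type inequalities with a single closed-form expectation. Starting from \eqref{esssr}--\eqref{issr}, I would first move the $[\cdot]^+$ outside the expectation via convexity and monotonicity of $\max(\cdot,0)$, which yields $\E\{[R_{sec}]^+\}\ge[\E\{R_{sec}\}]^+$. Splitting the inner $\log_2$ into three pieces then reduces the problem to lower-bounding $\E\{\ln(1+\gamma_{S_1})\}+\E\{\ln(1+\gamma_{S_2})\}$ and upper-bounding $\E\{\ln(1+\gamma_R)\}$, with the common factor $(1-P_{poj})(1-P_{por})\tfrac{1-\alpha}{2\ln 2}$ giving the outer shape of \eqref{R_LB}.

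For the relay term, concavity of $\ln(1+x)$ gives $\E\{\ln(1+\gamma_R)\}\le\ln(1+\E\{\gamma_R\})$. The numerator and denominator of \eqref{gammaR} depend on disjoint, independent channel gains, so $\E\{\gamma_R\}=\E\{\text{num}\}\cdot\E\{1/\text{denom}\}$, and approximating $\E\{1/D\}$ by $1/\E\{D\}$ (the tightest Jensen direction) produces exactly \eqref{T_3}.

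For each source term, the key observation is that $x\mapsto\ln(1+a/x)$ is \emph{convex} on $x>0$. Conditioning on $X=|h_{S_1R}|^2$ and $Y=|h_{S_2R}|^2$ in $\gamma_{S_2}$, the only random quantity left in the denominator is the jamming-leakage summand $\tfrac{2\eta_J\alpha N_0 U(P_{S_1}Z+P_{S_2}W)}{P_{S_1}X+P_{S_2}Y}$; Jensen's inequality over the independent variables $Z,W,U$ therefore replaces it by its mean $\tfrac{2\eta_J\alpha N_0\mu_{RJ}(P_{S_1}\mu_{S_1J}+P_{S_2}\mu_{S_2J})}{P_{S_1}X+P_{S_2}Y}$, which is a strict lower bound on $\E\{\ln(1+\gamma_{S_2})\mid X,Y\}$. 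The residual $(X,Y)$-dependence then enters only through the outer product $XY$ in the numerator and through $H\treq 1/(P_{S_1}X+P_{S_2}Y)$ in the denominator; using independence so that $\E\{XY\}=\mu_{S_1R}\mu_{S_2R}$ and approximating the remaining $(X,Y)$-integral by mean substitution yields \eqref{T_1}, and swapping the two source indices yields \eqref{T_2}.

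Finally I would evaluate $\E\{H\}$ in closed form. Setting $A=P_{S_1}X$ and $B=P_{S_2}Y$, which are independent exponentials of means $a=P_{S_1}\mu_{S_1R}$ and $b=P_{S_2}\mu_{S_2R}$, convolution yields $f_{A+B}(t)=(e^{-t/b}-e^{-t/a})/(a-b)$ for $a\ne b$, and the Frullani integral gives
\begin{equation*}
\E\{H\}=\frac{1}{a-b}\int_0^\infty\frac{e^{-t/b}-e^{-t/a}}{t}\,dt=\frac{\ln(a/b)}{a-b},
\end{equation*}
with the degenerate case $1/\mu$ obtained by l'H\^opital when $a=b=\mu$, reproducing \eqref{Eh}. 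The hardest step is the source-side bound: the coupling of $(X,Y)$ through both the outer factor $XY$ and the inner reciprocal $H$ cannot be handled by a single clean Jensen step, which forces the mixed strict-Jensen-plus-mean-substitution argument and is why \eqref{T_1}--\eqref{T_3} carry ``$\approx$'' even though the overall \eqref{R_LB} is stated as a (tight) lower bound.
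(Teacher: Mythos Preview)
Your proposal is essentially the paper's own proof: both pull $[\cdot]^+$ outside the expectation, split into three log-terms, and replace each $\E\{\ln(1+\gamma)\}$ by $\ln(1+\E\{\text{num}\}/\E\{\text{denom}\})$ (the paper simply cites the blanket approximation $\E\{\ln(1+X/Y)\}\approx\ln(1+\E X/\E Y)$ rather than tracking Jensen directions separately). Your evaluation of $\E\{H\}$ via the Frullani integral is a cleaner shortcut than the paper's route, which first derives $f_H(h)$ by change of variables from the sum density and then integrates $\int_0^\infty h f_H(h)\,dh$, but the result is the same.
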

\begin{proof}
See Appendix B.
\end{proof}
\begin{remark}
As shown in the numerical results, the lower bound expression in \eqref{R_LB} is tight enough in high SNR regime.
\end{remark}

\section{Numerical Results and Discussions}
In this section, we provide some numerical results to verify the accuracy of the provided expressions. We set $P_{S_{1}}=P_{S_{2}}=10$ dBW, $\eta_R=\eta_J=0.7$, the minimum EH circuitry threshold $\Theta_R=0$ dBm, the noise power $N_{0}=-10$ dBm, $d_{S_1R}=d_{S_2R}=d_{S_1J}=d_{S_2J}=2d_{RJ}=3$m, the mean channel power gain $\mu_{ij}=d_{ij}^{-\rho}$ and the path-loss exponent $\rho = 2.7$ similar to \cite{Lu}.

\begin{figure}[t]
	\centering
	\includegraphics[width= \columnwidth]{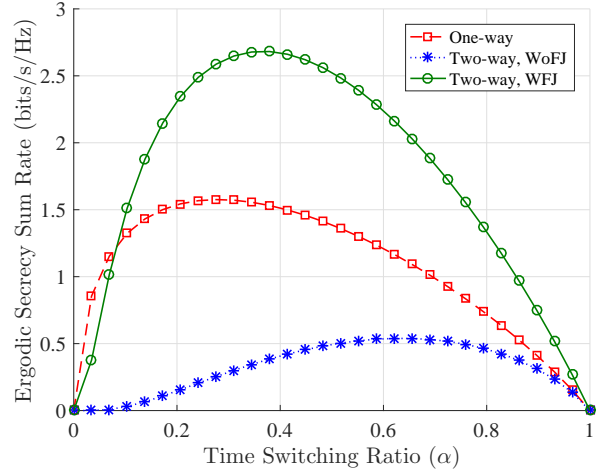}
	\caption{\small ESSR versus the time switching ratio of one-way, two-way WoFJ, and two-way WFJ scenarios.}
	\label{fig3}
\end{figure}

Fig. \ref{fig3}  shows that the ESSR is a quasi-concave function of the time switching ratio. For the given system parameters, the maximum ESSR are obtained at $\alpha_{opt} = 0.28$, $\alpha_{opt} = 0.38$ and $\alpha_{opt} = 0.62$, for one-way, two-way WFJ and two-way WoFJ, respectively. This highlights the importance of the TS ratio which should be taken into account in the system design. This observation means that the security of the network is highly dependent on the both jamming strategy (WFJ or WoFJ) and the TS ratio. If the TS ratio is too low, the harvested energy at the relay (and the FJ) may be too low and then, power outage may occur or the received SNR at the sources may be too low. On the other hand, if the TS ratio is too high, insufficient time is dedicated for the relay to broadcast the information signal and hence, the received instantaneous end-to-end SNR at the receivers may be too low. As such, the figure reveals a trade-off between a secure transmission and a reliable communication which will be taken into account in our future work.

\begin{figure}[t]
	\centering
	\includegraphics[width= \columnwidth]{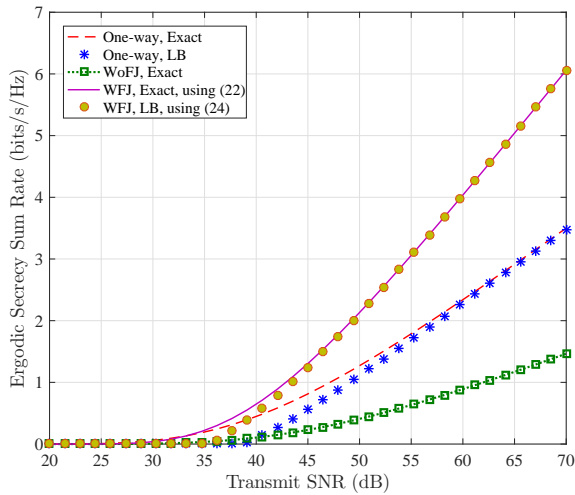}
	\caption{\small ESSR versus the transmit SNR of one-way, two-way WoFJ, and two-way WFJ scenarios. }
	\label{fig4}
\end{figure}

Fig. \ref{fig4} plots the ESSR versus transmit SNR of one-way, two-way WFJ, and two-way WoFJ scenarios. We observe that: 1) The exact numerical expression in \eqref{essr} is well-approximated in the high SNR regime by our derived closed-form lower bound expression in \eqref{R_LB}, 2) The ESSR is significantly enhanced as the transmit SNR increases for both the one-way scenario \cite{Kalamkar} and the proposed two-way communication scenarios, 3) In the high SNR regime the proposed two-way WFJ communication  outperforms the one-way scenario. For example, at SNR = 50 dB, the ESSR of the two-way WFJ scenario is 1 bit/s/Hz more than that the one-way transmission scenario provides; approximately two times as much as it does, and 4) Evidently, the high SNR-slope of the proposed two-way WFJ communication is twice the one-way scenario and the two-way WoFJ scenario.

\begin{figure}[t]
	\centering
	\includegraphics[width= \columnwidth]{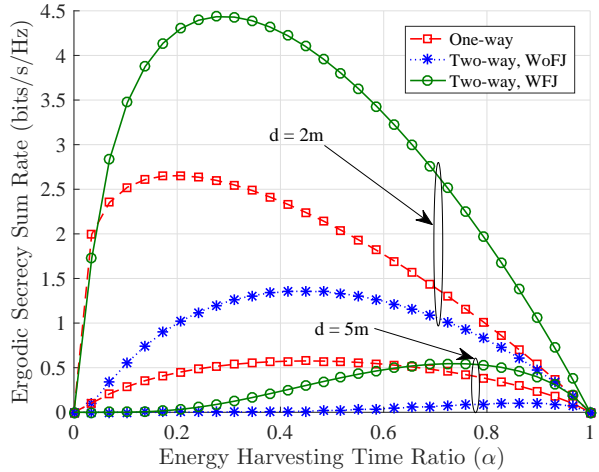}
	\caption{\small ESSR versus the energy harvesting time ratio of one-way, two-way WoFJ, and two-way WFJ scenarios with different distances between nodes. }
	\label{fig5}
\end{figure}

Fig. \ref{fig5} illustrates the impact of distances between the network nodes on the ESSR. We assume that all the nodes except~\src-\des~are located in equal distances from each other denoted by $d$. One interesting result from Fig. \ref{fig5} is that the proposed two-way WFJ scenario significantly outperforms the one-way and two-way WoFJ communications when the network nodes are close together, i.e. $d=2$m. It is also observed that in this topology, the energy harvesting time is much less than the broadcasting time to achieve the optimal ESSR. Moreover, by extending the network scale to $d=5$m, the maximum ESSR can be achieved if more time is dedicated to RF energy harvesting. In other words, as observed in Fig. \ref{fig5}, for the given system parameters of the WFJ scenario, the maximum ESSR with respect to $\alpha$ is achieved when we set $\alpha=0.75$. Consequently, to achieve the maximum ESSR of the proposed WFJ scenario, the more time is required to harvest RF energy in the first phase as the network topology extends.

\section{Conclusion}
In this paper, we proposed a wirelessly powered two-way cooperative network in which two sources communicate via an untrusted relay. To enhance the secrecy rate, a friendly jammer (FJ) is employed. We adapted the time switching (TS) protocol at the relay and investigated the ergodic secrecy sum rate (ESSR) with friendly jamming (WFJ). An accurate lower bound expression was derived for the ESSR. Numerical results revealed the priority of the proposed two-way communication. Furthermore, novel engineering insights were discussed regarding the impact of TS ratio, SNR and node locations on the ESSR.

\section*{Appendix A}
The power outage probability at~\rr~is represented by
\begin{equation*}\label{wsx}
P_{por}=\ensuremath{\mathbb P}\left( P_{R} < \Theta_{R} \right),
\end{equation*}
where $P_{R}=X+Y$, and
$X=P_{S_{1}}|h_{S_{1}R}|^2$ and $Y=P_{S_{2}}|h_{S_{2}R}|^2$ are two exponential r.v.s with means equal to $\frac{1}{m_x}=P_{S_{1}}\mu_{S_{1}R}$ and $\frac{1}{m_y}=P_{S_{2}}\mu_{S_{2}R}$, respectively. Accordingly, the probability density function of $S$ can be obtained as
\begin{eqnarray}\label{qaz}
f_S(s){\hspace {-3mm}}&=&{\hspace {-3mm}}\int_{0}^{s} f_{XY}(x,s-x)dx \nonumber \\
{\hspace {-3mm}}&\stackrel{(a)}{=}&{\hspace {-3mm}} \int_{0}^{s} f_{X}(x)f_{Y}(s-x)dx \nonumber \\
{\hspace {-3mm}}&=&{\hspace {-3mm}}\int_{0}^{s}m_x\emph{e}^{-m_xx}m_y\emph{e}^{-m_y(s-x)}dx \nonumber \\
{\hspace {-3mm}}&=&{\hspace {-3mm}}	\left\{\begin{array}{ll}
m^2s\emph{e}^{-ms},  \space\quad \qquad \qquad m_x=m_y\treq m \\
\frac{m_xm_y}{m_x-m_y}\left(\emph{e}^{-m_xs}-\emph{e}^{-m_ys}\right), \quad m_x\neq m_y
\end{array}
\right.
\end{eqnarray}
where $s$ can take only non-negative values and $(a)$ follows from the fact that two r.v.s X and Y are independent. Then, using \eqref{qaz}, we can write
\begin{equation} \label{qwe}
P_{por}=\ensuremath{\mathbb P}\left( S < \Theta_{R} \right) =\int_{0}^{\Theta_{R}}f_S(s)ds.
\end{equation}
By evaluating the integral expression \eqref{qwe}, we arrive at the $P_{por}$ in \eqref{por}.

\section*{Appendix B}

The lower bound expression of the ESSR $\bar{R}_{sec}$ can be written as follows
\begin{eqnarray}\label{lb}
\bar{R}_{sec}{\hspace {-3mm}}&=&{\hspace {-3mm}}(1-P_{poj})(1-P_{por}) \nonumber \\ {\hspace {-3mm}}&\times&{\hspace {-3mm}}\E\left \lbrace  \frac{(1-\alpha)}{2}\left[\log_{2}\frac{(1+\gamma_{S_{2}})(1+\gamma_{S_{1}})}{(1+\gamma_{R})}\right]^+\right\rbrace  \\ {\hspace {-3mm}}&\stackrel{(a)}{\geq}&{\hspace {-3mm}}(1-P_{poj})(1-P_{por}) \nonumber\\ {\hspace {-3mm}}&\times&{\hspace {-3mm}}\left[ \E\left\lbrace  \frac{(1-\alpha)}{2} \left(\log_{2}\frac{(1+\gamma_{S_{2}})(1+\gamma_{S_{1}})}{(1+\gamma_{R})}\right)\right\rbrace \right]^+ \nonumber \\ {\hspace {-3mm}}&=&{\hspace {-3mm}}(1-P_{poj})(1-P_{por}) \times\nonumber\\ {\hspace {-3mm}}&&{\hspace {-3mm}} \begin{split} &\biggm[\frac{1-\alpha}{2\ln(2)}\bigg( \stackrel{}{\underset{T_{1}}{\underbrace{\E\left\lbrace\ln\left(1+\gamma_{S_{2}}\right)\right\rbrace}}}+\stackrel{}{\underset{T_{2}}{\underbrace{\E\left\lbrace\ln\left(1+\gamma_{S_{1}}\right)\right\rbrace}}}  \\ &-\stackrel{}{\underset{T_{3}}{\underbrace{\E\left\lbrace\ln\left(1+\gamma_{R}\right)\right\rbrace }}}\bigg)\biggm]^+\treq\bar{R}_{LB}, \\ \end{split}
\end{eqnarray}
where inequality $(a)$ follows from the fact that $\E\left\lbrace \max(X,Y)\right\rbrace {\hspace {-2mm}}\geq{\hspace {-2mm}} \max(\E\{X\},\E\{Y\})$ \cite{papoulis}. Moreover, for any positive r.v.s X and Y, the following approximation can be used \cite{A} \[\E\left\lbrace\ln\left( 1+\frac{X}{Y}\right) \right\rbrace \approx \ln\left( 1+\frac{\E\left\lbrace X\right\rbrace }{\E\left\lbrace Y\right\rbrace }\right). \]
Based on the above approximation and using the expression \eqref{gammad1hsnr}, the part $T_{1}$ is given by
\begin{eqnarray}\label{x}
T_{1}\approx\ln\left(1+\frac{\E \left\lbrace 2\eta_{R}\alpha P_{S_{1}}XY\right\rbrace } {\E\left\lbrace\splitdfrac{2\eta_{R}\alpha YN_{0}+N_{0}(1-\alpha)}{+\frac{2N_{0}\eta_{J}\alpha(P_{S_{1}}Z+P_{S_{2}}W)U}{P_{S_{1}}X+P_{S_{2}}Y}}\right\rbrace }\right).
\end{eqnarray}
Since the r.v.s X, Y, Z, W and U are independent, by using the expression \eqref{gammad2hsnr}, we can conclude \eqref{T_1}. $T_2$ and $T_3$ can be calculated in the same way as $T_1$, and \eqref{T_2} and \eqref{T_3} are obtained. Finally, to evaluate $\E(H)$, we define $H\hspace{-1mm}=\hspace{-1mm}\frac{1}{P_{S_{1}}X+P_{S_{2}}Y}$ as a new r.v. Let us denote $R=P_{S_{1}}X$, $ S=P_{S_{2}}Y$ as two exponential r.v.s with the means equal to  $\mu_{R}=P_{S_{1}}\mu_{S_{1}R}$ and $\mu_{S}=P_{S_{2}}\mu_{S_{2}R}$, respectively. Using \eqref{qaz}, $f_{H}(h)$ can be calculated as
\begin{align}\label{fH}
f_{H}(h)=\begin{cases}
\frac{ \exp\left(-\frac{1}{h\mu_{S}}\right)-\exp\left( -\frac{1}{h\mu_{R}}
	\right)}{h^{2}\left( \mu_{S}-\mu_{R}\right) }
, &  \mu_{R}\neq\mu_{S}\\
\frac{1}{h\mu^2}\exp(-\frac{1}{h\mu}), &  \mu_{R}=\mu_{S}\treq\mu
\end{cases}
\end{align}
By computing $\E(H)=\int_{0}^{\infty}hf_{H}(h)dh$, \eqref{Eh} is proved.

\end{document}